\newtheorem{definition}{Definition}
\newtheorem{theorem}{Theorem}
\newtheorem{lemma}{Lemma}
\newtheorem{corollary}{Corollary}
\newtheorem{construction}{Construction}
\title{\LARGE \bf
Bounds and Constructions for\\ Multi-Symbol Duplication Error Correcting Codes
}
\author{Andreas Lenz, Niklas J\"unger and Antonia Wachter-Zeh%
\thanks{This work was supported by the Institute for Advanced Study (IAS), Technische Universit\"{a}t M\"{u}nchen (TUM), with funds from the German Excellence Initiative and the European Union's Seventh Framework Program (FP7) under grant agreement no.~291763}%
\thanks{Andreas Lenz, Niklas J\"unger and Antonia Wachter-Zeh are with the Institute of Communications Engineering, Technical University of Munich, 80333 Munich, Germany.}%
\thanks{E-Mail: {\tt andreas.lenz@mytum.de}, {\tt niklas.juenger@tum.de}, {\tt antonia.wachter-zeh@tum.de}}%
}
\begin{document}

\maketitle
\thispagestyle{empty}
\pagestyle{empty}

\begin{abstract}

In this paper, we study codes correcting $t$ duplications of $\ell$ consecutive symbols. These errors are known as tandem duplication errors, where a sequence of symbols is repeated and inserted directly after its original occurrence. Using sphere packing arguments, we derive non-asymptotic upper bounds on the cardinality of codes that correct such errors for any choice of parameters. Based on the fact that a code correcting insertions of $t$ zero-blocks can be used to correct $t$ tandem duplications, we construct codes for tandem duplication errors. We compare the cardinalities of these codes with their sphere packing upper bounds. Finally, we discuss the asymptotic behavior of the derived codes and bounds, which yields insights about the tandem duplication channel.

\end{abstract}

\section{Introduction}
Significant recent advances in creating, storing and reading deoxyribonucleic acid (DNA) molecules have paved the way for new storage technologies of digital data based on these macromolecules. According to the National Human Genome Research Institute (NHGRI), the sequencing costs per mega base pair (MBP) have dropped by a factor of $10^4$ to less than $\$ 0,1 / \mathrm{MBP}$ over the last 10 years \cite{Wet18}, which renders storage technologies based on DNA as a competitive candidate for long-term reliable and high-density archival storage. The challenge for reliable storage within such systems is the occurrence of new types of errors within the DNA molecules. Typical error types are point insertion, deletion and substitution errors, and also multi-nucleotide repetitions, such as tandem and palindromic duplications. The latter are errors, where a subword of the DNA is duplicated and inserted directly after the original string, and additionally reversed for the case of palindromic duplications. While substitution errors are well studied, little is known about the correction of insertion, deletion and duplication errors. An example for a tandem duplication of length~$3$ in a DNA sequence $GGCTAT$ is for example $GGCTA\underline{CTA}T$, where the underlined part highlights the duplication. Note that these duplication errors are special burst insertion errors, which form a type of errors, where a string of $\ell$ consecutive random symbols is inserted into a word. 

The design and analysis of codes correcting substitution errors and the underlying Hamming metric has been extensively studied since over 70 years. However, the knowledge about other error types, such as insertion and deletion errors is relatively limited. As such, there are only few codes that can correct more than a single insertion or deletion \cite{HF02,BGZ18}. Non-asymptotic upper bounds on the cardinalities of codes in the Levenshtein metric have only been found recently \cite{FVY15,KK13}. Duplication errors, which can be considered as a special kind of insertion errors have first been studied in the work of Levenshtein \cite{Lev65}, where codes correcting single symbol duplications and their asymptotic cardinality upper bounds have been derived for binary alphabets. More recently, in \cite{DA10} asymptotically optimal codes correcting duplication errors and prefixing methods to construct block codes have been found. An explicit construction with efficient encoding for multiple duplications has been presented in \cite{MV17}. These codes use the well-known relation of duplications and errors in the $\ell_1$-metric and are based on Lee-metric BCH codes \cite{RS94}. In \cite{JFSB16}, an optimal construction for the correction of an infinite number of fixed length $\ell$ duplications has been derived. Further, codes that correct duplications of variable lengths, with length up to $2$ or $3$ have been found. Both codes are constructed using results from the formal language theory and by choosing irreducible words with respect to tandem duplications. In~\cite{LWY17}, non-asymptotic upper bounds on codes correcting single tandem and palindromic duplications have been derived and in \cite{LWY18} several codes correcting single tandem and palindromic duplication errors have been presented.

In this paper, we extend these results and derive code constructions that correct $t$ tandem duplications, each of same length $\ell$ over arbitrary alphabets $\mathbb{Z}_p$ for any code length $n$. This generalizes the construction in \cite{DA10} to arbitrary alphabets and arbitrary duplication lengths and also generalizes \cite{LWY18} to arbitrary number of errors. In Construction \ref{con:lara}, we will use results from \cite{JFSB16} to construct a code based on tandem duplication roots. We compare the cardinalities of these constructions with upper bounds derived from sphere packing arguments. Note that recently, asymptotic bounds on the size of codes correcting a fixed number of $t$ tandem duplications, each of fixed length $\ell$ have been found \cite{KT18b}. In contrast, here we focus on non-asymptotic bounds, which are valid for any $n,t,\ell$.
\section{Preliminaries}

Let $\mathbb{Z}_p=\set{0,1,2,\dotsc,p-1}$ denote the finite ring with $p$ elements, where operations on elements of $\mathbb{Z}_p$ are done modulo $p$. 
The set of all finite words over $\mathbb{Z}_p$ is denoted by $\mathbb{Z}_p^*$. 
The set $\mathbb{Z}_p^n$ denotes all words of length $n$ over $\mathbb{Z}_p$. 
The vector $\mathbf{x}=(x_1x_2 \dotsc x_n) \in \mathbb{Z}_p^n$ represents a word of length $n$ with $x_i \in \mathbb{Z}_p$ denoting the $i$-th symbol.
The concatenation of two finite words $\mathbf{x},\mathbf{y} \in \mathbb{Z}_p^*$ is $(\mathbf{x}\mathbf{y}) \in \mathbb{Z}_p^{|\mathbf{x}|+|\mathbf{y}|}$, where $|\mathbf{x}|$ and $|\mathbf{y}|$ denotes the length of the words $\mathbf{x}$ and $\mathbf{y}$, respectively. 
With the term subword of a word $\mathbf{x} \in \mathbb{Z}_p^n$ we refer to a sequence consisting of consecutive letters $(x_i x_{i+1} \dotsc x_{i+j-1}) \in \mathbb{Z}_p^j$ from $\mathbf{x}$ starting at position $i$ with length $j$. For a function $f(\mathbf{x})$, where $\mathbf{x} \in \mathbb{Z}_p^*$ and a set $\mathcal{A} = \{\mathbf{a}_1, \dots, \mathbf{a}_{|\mathcal{A}|}\} \subseteq \mathbb{Z}_p^*$, we write $f(\mathcal{A}) = f(\mathbf{a}_1) \cup \dots, f(\mathbf{a}_{|\mathcal{A}|})$ as the union of all function evaluations. A tandem duplication is defined as follows.
\begin{definition}
	Given the word $\mathbf{x} \in \mathbb{Z}_p^*$, a tandem duplication of length $\ell$ at position $0 \leq i \leq |x|-\ell$ in $\mathbf{x}$ is
	\begin{equation*}
	\tau_{\ell}(\mathbf{x},i) = (\mathbf{u}\mathbf{v}\mathbf{v}\mathbf{w}),
	\end{equation*}
	where $\mathbf{x}=(\mathbf{u}\mathbf{v}\mathbf{w})$ with $|\mathbf{u}|=i$ and $\mathbf{v} \in \mathbb{Z}_p^{\ell}$ is the duplicated part in $\mathbf{x}$.
	
	\label{def:tandem_duplication}
\end{definition}
We refer to the common definition of an error ball around $\mathbf{x}$, as stated in the following definition.
\begin{definition}
	Given the word $\mathbf{x} \in \mathbb{Z}_p^*$, the $t$-tandem duplication ball around $\mathbf{x}$ is
	\begin{equation*}
	B_{t}^{\tau_\ell}(\mathbf{x}) = \Set{\mathbf{y} \in \mathbb{Z}_p^*|\mathbf{y}=\tau_\ell(\dotsc(\tau_\ell(\mathbf{x}, i_1)\dotsc), i_h), h \leq t},
	\end{equation*}
	i.e., the set of all words $\mathbf{y} \in \mathbb{Z}_p^*$ that can be reached by inserting at most $t$ tandem duplications of fixed length $\ell$ into the word~$\mathbf{x}$.
	
	\label{def:error_spheres}
\end{definition}

Similarly, we define the error spheres, as the set of words that can be reached with \emph{exactly} $t$ errors.

\begin{definition}
	Given the word $\mathbf{x} \in \mathbb{Z}_p^n$, the $t$-tandem duplication sphere around $\mathbf{x}$ is
	\begin{equation*}
	S_{t}^{\tau_\ell}(\mathbf{x}) = \Set{\mathbf{y} \in \mathbb{Z}_p^*|\mathbf{y}=\tau_\ell(\dotsc(\tau_\ell(\mathbf{x}, i_1)\dotsc), i_t)},
	\end{equation*}
	i.e., the set of all words $\mathbf{y} \in \mathbb{Z}_p^*$ that can be reached by inserting exactly $t$ tandem duplications of fixed length $\ell$ into~$\mathbf{x}$.
	\label{def:error_surface}
\end{definition}
Based on this,
a $t$-tandem-duplication-correcting code is defined as follows.
\begin{definition}
	The $p$-ary code $\mathcal{C} \subseteq \mathbb{Z}_p^n$ with codeword length $n$ is $t$-tandem-duplication-correcting with respect to duplication length $\ell$, if for all distinct codewords $\mathbf{c}_i,\mathbf{c}_j \in \mathcal{C}$
	\begin{equation*}
	B_t^{\tau_\ell}(\mathbf{c}_i) \cap B_{t}^{\tau_\ell}(\mathbf{c}_j) = \emptyset.
	\end{equation*}%
	\label{def:general_code_definition}
\end{definition}
\vspace{-.5cm}
Since the error spheres are subsets of the error balls, $S_{t}^{\tau_\ell}(\mathbf{c}_i) \subseteq B_{t}^{\tau_\ell}(\mathbf{c}_i)$ for all codewords $\mathbf{c}_i$, the error spheres $S_{t}^{\tau_\ell}(\mathbf{c}_i)$ for all $\mathbf{c}_i \in \mathcal{C}$ are disjoint as well. Note that it is possible to define tandem \textit{deletions} as the inverse operation of tandem duplications, which has been used in \cite[Cor. 1]{LWY18} to formulate tight bounds on duplication correcting codes. 

The following map has been found to be useful for constructing codes correcting tandem duplications \cite{JFSB16}.
\begin{definition}
	For any $\mathbf{x} \in \mathbb{Z}_p^n$, we define the map
	$$ \phi(\mathbf{x}) = (\mathbf{y},\mathbf{z}) \in \mathbb{Z}_p^\ell \times \mathbb{Z}_p^{n-\ell}, $$
	where
	\begin{align*}
		\mathbf{y} &= (x_1,x_2,\dots,x_\ell)\\
		\mathbf{z} &= (x_\ell-x_1, x_{\ell+1}-x_2,\dots,x_n-x_{n-\ell+1}).
	\end{align*}
\end{definition}
Note that this mapping is bijective, and therefore $\phi^{-1}$ is well defined. This mapping is useful since a tandem duplication is translated into a insertions of zero-blocks, when applied to $\phi$. Based on the mapping $\phi$, the $\ell$-duplication root, i.e. the word that is obtained by removing all length $\ell$ duplicates from $\mathbf{x}$ is defined as follows.
\begin{definition} \label{def:mu}
	For any word $\mathbf{x} \in \mathbb{Z}_p^n$ with $\phi(\mathbf{x}) = (\mathbf{y},0^{b_1} u_1 0^{b_1} u_2 \dots u_r 0^{b_{r+1}} )$, where $\mathbf{y} \in \mathbb{Z}_p^\ell$ and $u_i \in \mathbb{Z}_p \setminus \{0\}$, we define the $\ell$-duplication root of $\mathbf{x}$ by
	
	$$ \mu(\mathbf{x})=\left(0^{b_1\, (\bmod \ell)} u_1 0^{b_2 \,(\bmod \ell)} u_2 \dots u_r 0^{b_{r+1} \,(\bmod \ell)} \right).$$
\end{definition}
\section{Sphere Packing Bound}

In this section, we derive two sphere packing upper bounds for the size of a $t$-tandem duplication correcting code $\mathcal{C}$ with codeword length $n$. These upper bounds will make use of the fact that certain partitions of $\mathbb{Z}_p^n$ have the property that words of two distinct parts have non-overlapping error balls. It is therefore possible to partition every code accordingly and bound each of these partitions. This statement is summarized in the following theorem.
\begin{theorem}
	Let $\mathcal{C} \subseteq \mathbb{Z}_p^n$ be a $t$-tandem duplication correcting code and $\psi_1,\dotsc,\psi_k \subseteq \mathbb{Z}_p^n$ be a partition of $\mathbb{Z}_p^n$.
	Then
	\begin{equation*}
		|\mathcal{C}| \leq \sum_{i=1}^{k} \frac{|S_t^{\tau_\ell}(\psi_i)|}{\min\limits_{\mathbf{x} \in \psi_i } |S_t^{\tau_\ell}(\mathbf{x})|}.
	\end{equation*}
	
	\label{thm:sp_bound} 
\end{theorem}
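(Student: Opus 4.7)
The plan is to partition the code $\mathcal{C}$ according to the given partition of $\mathbb{Z}_p^n$ and then apply a sphere-packing inequality inside each part. Set $\mathcal{C}_i = \mathcal{C} \cap \psi_i$; then $\{\mathcal{C}_i\}_{i=1}^k$ is a partition of $\mathcal{C}$, so $|\mathcal{C}| = \sum_{i=1}^k |\mathcal{C}_i|$ and it suffices to prove the per-part bound $|\mathcal{C}_i| \leq |S_t^{\tau_\ell}(\psi_i)| / \min_{\mathbf{x} \in \psi_i} |S_t^{\tau_\ell}(\mathbf{x})|$.

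The two ingredients I would use are already on the table. First, by the remark immediately after Definition \ref{def:general_code_definition}, the spheres $S_t^{\tau_\ell}(\mathbf{c})$ for $\mathbf{c} \in \mathcal{C}$ are pairwise disjoint, since they are contained in the pairwise disjoint error balls $B_t^{\tau_\ell}(\mathbf{c})$. Second, by the set-valued convention $f(\mathcal{A}) = \bigcup_{\mathbf{a} \in \mathcal{A}} f(\mathbf{a})$ introduced in the Preliminaries, every codeword $\mathbf{c} \in \mathcal{C}_i \subseteq \psi_i$ satisfies $S_t^{\tau_\ell}(\mathbf{c}) \subseteq S_t^{\tau_\ell}(\psi_i)$. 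Combining disjointness with this inclusion gives
\begin{equation*}
\sum_{\mathbf{c} \in \mathcal{C}_i} |S_t^{\tau_\ell}(\mathbf{c})| \;=\; \Big|\bigcup_{\mathbf{c} \in \mathcal{C}_i} S_t^{\tau_\ell}(\mathbf{c})\Big| \;\leq\; |S_t^{\tau_\ell}(\psi_i)|.
\end{equation*}

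From here the calculation is routine: every summand on the left is at least $\min_{\mathbf{x} \in \psi_i} |S_t^{\tau_\ell}(\mathbf{x})|$, so $|\mathcal{C}_i| \cdot \min_{\mathbf{x} \in \psi_i} |S_t^{\tau_\ell}(\mathbf{x})| \leq |S_t^{\tau_\ell}(\psi_i)|$, which after dividing and summing over $i$ yields the claimed bound.

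I do not foresee a genuine obstacle; the argument is a standard partitioned sphere-packing estimate. The only point that needs a little care is to ensure that the sphere inclusion $S_t^{\tau_\ell}(\mathbf{c}) \subseteq S_t^{\tau_\ell}(\psi_i)$ is used, rather than the weaker ball inclusion, so that disjointness of the $S_t^{\tau_\ell}(\mathbf{c})$ can be invoked to turn the union-bound into an equality of sizes and hence into a sharp counting inequality.
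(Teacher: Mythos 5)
Your proof is correct and follows essentially the same route as the paper: partition the code via $\mathcal{C}_i = \mathcal{C} \cap \psi_i$, use disjointness of the spheres $S_t^{\tau_\ell}(\mathbf{c})$ together with the inclusion $S_t^{\tau_\ell}(\mathcal{C}_i) \subseteq S_t^{\tau_\ell}(\psi_i)$ to get $|\mathcal{C}_i| \cdot \min_{\mathbf{x} \in \psi_i} |S_t^{\tau_\ell}(\mathbf{x})| \leq |S_t^{\tau_\ell}(\psi_i)|$, and sum over $i$. No issues.
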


\begin{proof}
	The space $\mathbb{Z}_p^n$ is divided into the $k$ disjoint parts $\psi_1,\dotsc,\psi_k \subseteq \mathbb{Z}_p^n$. 
	It follows from $\mathcal{C} \subseteq \mathbb{Z}_p^n$ that the code $\mathcal{C}$ is as well separated into $k$ disjoint subsets $\mathcal{C}_i=\mathcal{C}\cap \psi_i$ for $1 \leq i \leq k$, which implies $\mathcal{C}_i \subseteq \psi_i$.
	For the error spheres it directly follows $S_t^{\tau_\ell}(\mathcal{C}_i) \subseteq S_t^{\tau_\ell}(\psi_i)$, and for their cardinalities $|S_t^{\tau_\ell}(\mathcal{C}_i)| \leq |S_t^{\tau_\ell}(\psi_i)|$ for all $i$. 	
	Furthermore, $|S_t^{\tau_\ell}(\mathcal{C}_i)|$ is bounded from below by 
	\begin{equation*}
		|\mathcal{C}_i|\min\limits_{\mathbf{x} \in \psi_i } |S_t^{\tau_\ell}(\mathbf{x})| \leq |\mathcal{C}_i|\min\limits_{\mathbf{c} \in \mathcal{C}_i } |S_t^{\tau_\ell}(\mathbf{c})| \leq|S_t^{\tau_\ell}(\mathcal{C}_i)|,
		\label{ineq:SPproof2}
	\end{equation*}
	where we used the fact, that $S_t^{\tau_\ell}(\mathbf{c}_1) \cap S_t^{\tau_\ell}(\mathbf{c}_2) = \emptyset$ for $\mathbf{c}_1 \neq \mathbf{c}_2 \in \mathcal{C}_i$. The cardinality of the code $\mathcal{C}= \mathcal{C}_1 \cup \dotsc \cup \mathcal{C}_k$ over all disjoint partitions $\mathcal{C}_i$ is equal to the sum of all partial cardinalities $|\mathcal{C}_i|$ and we obtain the upper bound
	\begin{equation*}
		|\mathcal{C}| \leq \sum_{i=1}^{k} \frac{|S_t^{\tau_\ell}(\psi_i)|}{\min\limits_{\mathbf{x} \in \psi_i } |S_t^{\tau_\ell}(\mathbf{x})|}
	\end{equation*}
	stated in the theorem.
\end{proof}

\subsection{Upper Bound for Tandem Duplication Errors}

\begin{lemma}
	For duplication length $\ell$, the size of any $t$-tandem-duplication-correcting code $\mathcal{C} \subseteq \mathbb{Z}_p^n $ is upper bounded by
	\begin{equation*}
		|\mathcal{C}| \leq \frac{t!}{(n+(t-1)\ell)^t} \frac{p^{n+t(\ell+1)}}{(p-1)^t}.
	\end{equation*}
	
	\label{lemma:sphere_packing_bound_simple}
\end{lemma}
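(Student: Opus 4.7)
The plan is to apply Theorem~\ref{thm:sp_bound} to a partition of $\mathbb{Z}_p^n$ chosen so that the sphere sizes depend only on the partition block. Specifically, for $r \in \{0, 1, \dots, n-\ell\}$ I would let $\psi_r$ consist of all $\mathbf{x} \in \mathbb{Z}_p^n$ whose image $\phi(\mathbf{x}) = (\mathbf{y},\mathbf{z})$ has exactly $r$ nonzero entries in $\mathbf{z}$. Since $\phi$ is a bijection, the sets $\psi_r$ partition $\mathbb{Z}_p^n$, and a straightforward count gives $|\psi_r| = p^\ell \binom{n-\ell}{r}(p-1)^r$.

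The next step is to evaluate the two quantities appearing in Theorem~\ref{thm:sp_bound}. The key fact from \cite{JFSB16}, which also follows by directly differencing $\tau_\ell(\mathbf{x},i)$ coordinatewise, is that under $\phi$ a tandem duplication of length $\ell$ becomes the insertion of an all-zero block of length $\ell$ into $\mathbf{z}$. Hence for $\mathbf{x} \in \psi_r$ the sphere $S_t^{\tau_\ell}(\mathbf{x})$ is in bijection with the ways of distributing $t$ identical zero-blocks among the $r+1$ zero-runs of $\mathbf{z}$, which gives the exact value $|S_t^{\tau_\ell}(\mathbf{x})| = \binom{r+t}{t}$. Because tandem duplications preserve the nonzero symbols of $\mathbf{z}$ and merely lengthen it by $t\ell$, the sphere $S_t^{\tau_\ell}(\psi_r)$ is contained in the set of words of length $n+t\ell$ whose $\phi$-image has exactly $r$ nonzero entries in its second component, yielding the crude upper bound $|S_t^{\tau_\ell}(\psi_r)| \leq p^\ell \binom{n+(t-1)\ell}{r}(p-1)^r$.

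Setting $N = n + (t-1)\ell$ and substituting into Theorem~\ref{thm:sp_bound} reduces the task to bounding
$$ p^\ell \sum_{r=0}^{n-\ell} \frac{\binom{N}{r}(p-1)^r}{\binom{r+t}{t}}. $$
The main obstacle is collapsing this sum into the clean closed form stated in the lemma. The identity I plan to exploit is
$$ \frac{\binom{N}{r}}{\binom{r+t}{t}} = \frac{t!}{(N+1)(N+2)\cdots(N+t)}\,\binom{N+t}{r+t}, $$
which follows by rewriting both binomial coefficients as ratios of factorials. After factoring out the constant prefactor, shifting the index to $s = r+t$, extending the range of summation to $0 \leq s \leq N+t$, and invoking the binomial theorem in the form $\sum_{s=0}^{N+t}\binom{N+t}{s}(p-1)^s = p^{N+t}$, together with the elementary bound $(N+1)\cdots(N+t) \geq N^{t}$, I obtain
$$ |\mathcal{C}| \leq \frac{t!\,p^{\ell+N+t}}{N^t (p-1)^t}, $$
which coincides with the claimed inequality since $p^{\ell+N+t} = p^{n+t(\ell+1)}$.
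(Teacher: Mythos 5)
Your proposal is correct and follows essentially the same route as the paper: the same partition by the Hamming weight of the second component of $\phi(\mathbf{x})$, the exact sphere size $\binom{r+t}{t}$, the same crude bound $p^\ell\binom{n+(t-1)\ell}{r}(p-1)^r$ on $|S_t^{\tau_\ell}(\psi_r)|$, and the same final estimate. The only cosmetic difference is that you close the sum via the binomial identity and the binomial theorem, whereas the paper phrases the identical computation as a sum of binomial probability masses bounded by $1$; the final step $(N+1)\cdots(N+t)\geq N^t$ is the paper's $\frac{a!}{(a+b)!}\leq\frac{1}{a^b}$.
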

\begin{proof}
	We consider the partition $\psi_r = \Set{\mathbf{x} \in \mathbb{Z}_p^n : \phi(\mathbf{x}) = (\mathbf{y},\mathbf{z}), \mathrm{wt}_{\mathrm{H}}(\mathbf{z}) =r }$.	The Hamming weight of $\mathbf{z}$ can take values from $0$ to $n-\ell$ and therefore by Theorem \ref{thm:sp_bound} we obtain
	\begin{equation*}
		|\mathcal{C}| \leq \sum_{r=0}^{n-\ell} 		\frac{|S_t^{\tau_\ell}(\psi_r)|}{\min\limits_{\mathbf{x} \in \psi_r } |S_t^{\tau_\ell}(\mathbf{x})|}.
	\end{equation*}
	Since a tandem duplication in $\mathbf{x}$ corresponds to an insertion of a zero-block of length $\ell$ in $\phi(\mathbf{x})$, the number of all possible words after corrupting the word $\mathbf{x} \in \psi_r$ with $t$ tandem duplications is $\binom{r+t}{t}$, independently of $\mathbf{x}$. This is because, there are $r+1$ bins, where zero-blocks can be inserted. To compute $|S_t^{\tau_\ell}(\psi_r)|$, we start by counting all words $\mathbf{z'} \in \mathbb{Z}_p^{n+(t-1)\ell}$, that can be reached by inserting exactly $t$ blocks of $\ell$ consecutive zeros to any $\mathbf{z} \in \mathbb{Z}_p^{n-\ell}$ with Hamming weight $\mathrm{wt}_{\mathrm{H}}(\mathbf{z})=r$. This quantity is equal to the number of all $p$-ary words of length $n+(t-1)\ell$, Hamming weight $r$, that contain at least $t$ blocks of $\ell$ consecutive zeros and it is therefore upper bounded by the number of all $p$-ary words of length $n+(t-1)\ell$ and Hamming weight $r$. Since the first $\ell$ symbols of $\mathbf{x} \in \psi_r$ can be chosen arbitrarily and $S_{t}^{\tau_\ell}(\mathbf{x}) \cap S_{t}^{\tau_\ell}(\mathbf{y}) = \emptyset$ for $(x_1, \dots, x_\ell) \neq (y_1, \dots, y_\ell)$, i.e., the error spheres are distinct for different prefixes, we have
	\begin{equation*}
		|S_{\ell}^t(\psi_r)| \leq p^{\ell} \binom{n+(t-1)\ell}{r}(p-1)^r,
	\end{equation*}
	We therefore obtain for the upper bound
	\begin{align*}
		|\mathcal{C}| &\leq p^{\ell} \sum_{r=0}^{n-\ell} \frac{\binom{n+(t-1)\ell}{r}}{\binom{r+t}{t}}(p-1)^r \\
		&= \frac{p^{n+t(\ell+1)}(n+(t-1)\ell)!t!}{(p-1)^t (n+(t-1)\ell+t)!} \sum_{r=t}^{n-\ell+t} B^{n+(t-1)\ell+t}_\frac{p-1}{p}(r),
	\end{align*}
	where $B^M_{q}(k) = \binom{M}{k} q^k(1-q)^{M-k}$ is the probability mass function of the binomial distribution with $M$ trials and success probability $q$. The sum in the above equation is therefore upper bounded by $1$ and we obtain
	\begin{align*}
		|\mathcal{C}| &\leq t! \frac{(n+(t-1)\ell)!}{(n+(t-1)\ell+t)!} \frac{p^{n+t(\ell+1)}}{(p-1)^t} \\
		&\leq \frac{t!}{(n+(t-1)\ell)^t} \frac{p^{n+t(\ell+1)}}{(p-1)^t},
	\end{align*}
	where we used $\frac{a!}{(a+b)!} \leq \frac{1}{a^b}$ for any $a,b \in \mathbb{N}$.
\end{proof}
From Lemma \ref{lemma:sphere_packing_bound_simple}, we can deduce an asymptotic bound on the cardinality of codes correcting $t$ tandem duplications.
\begin{corollary}
	For any $\ell$, the size of any $t$-tandem-duplication-correcting code $\mathcal{C} \subseteq \mathbb{Z}_p^n $ satisfies asymptotically for $n \rightarrow \infty$
	$$|\mathcal{C}| \lesssim \frac{t! p^{n+t(\ell+1)}}{(n(p-1))^t},$$
	where $f(n) \lesssim g(n)$ means that $\lim_{n\rightarrow \infty} \frac{f(n)}{g(n)} \leq 1$.
\end{corollary}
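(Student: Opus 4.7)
The plan is to derive the corollary directly from the non-asymptotic upper bound of Lemma \ref{lemma:sphere_packing_bound_simple}, treating $t$, $\ell$, and $p$ as fixed constants while letting $n \to \infty$. Since the only $n$-dependent discrepancy between the two bounds is the factor $(n+(t-1)\ell)^t$ versus $n^t$ in the denominator, the argument reduces to a straightforward limit computation.

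First, I would restate Lemma \ref{lemma:sphere_packing_bound_simple} as
\begin{equation*}
	|\mathcal{C}| \leq \frac{t!}{(n+(t-1)\ell)^t} \cdot \frac{p^{n+t(\ell+1)}}{(p-1)^t}.
\end{equation*}
Then I would form the ratio of this upper bound to the claimed asymptotic expression $\frac{t! \, p^{n+t(\ell+1)}}{(n(p-1))^t}$. All factors involving $p$, $t!$, and the exponential $p^{n+t(\ell+1)}$ cancel, leaving
\begin{equation*}
	\frac{|\mathcal{C}|}{\frac{t! \, p^{n+t(\ell+1)}}{(n(p-1))^t}} \leq \left(\frac{n}{n+(t-1)\ell}\right)^t.
\end{equation*}

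Finally, since $\ell$ and $t$ are fixed, $\frac{n}{n+(t-1)\ell} \to 1$ as $n \to \infty$, and hence the $t$-th power also tends to $1$. This is precisely the definition of the relation $\lesssim$ introduced in the corollary statement, so the claim follows.

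I do not foresee any real obstacle here; the whole argument is essentially a one-line limit and the corollary is just a cleaner asymptotic restatement of Lemma \ref{lemma:sphere_packing_bound_simple}. The only minor care to take is to make explicit that $t$, $\ell$, and $p$ are held constant as $n \to \infty$, so that the dominant $n$-dependence in the denominator of Lemma \ref{lemma:sphere_packing_bound_simple} is indeed $n^t$.
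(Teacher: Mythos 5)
Your proposal is correct and matches the paper's (implicit) argument: the corollary is stated without proof as a direct consequence of Lemma~\ref{lemma:sphere_packing_bound_simple}, and your ratio computation $\left(\frac{n}{n+(t-1)\ell}\right)^t \to 1$ is exactly the intended deduction. In fact, since $(t-1)\ell \geq 0$, the ratio is at most $1$ for every $n$, so the bound even holds non-asymptotically.
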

The upper bound in Lemma \ref{lemma:sphere_packing_bound_simple} for the code cardinality directly implies a lower bound on the redundancy of any code $\mathcal{C}\subseteq\mathbb{Z}_p^n$ that corrects $t$ tandem duplication of length $\ell$.
\begin{lemma} \label{lem:c1c}
	For duplication length $\ell$, the redundancy of any $t$-tandem-duplication-correcting code $\mathcal{C} \subseteq \mathbb{Z}_p^n $ is at least
	\begin{equation*}
	n-\log_p |\mathcal{C}| \geq t\log_p n - t\log_p (t(p-1)) - t(\ell+1).
	\end{equation*}
\end{lemma}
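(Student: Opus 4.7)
The plan is to obtain the redundancy bound directly from the cardinality bound in Lemma \ref{lemma:sphere_packing_bound_simple} by taking logarithms base $p$ and applying two elementary weakenings. Since the target inequality only involves $\log_p n$, $\log_p(t(p-1))$ and $\ell+1$, the proof will essentially be an algebraic clean-up of the existing cardinality bound rather than any new combinatorial argument.

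Concretely, I would first take $\log_p$ of both sides of the inequality in Lemma \ref{lemma:sphere_packing_bound_simple}, giving
\begin{equation*}
\log_p|\mathcal{C}| \leq \log_p t! + n + t(\ell+1) - t\log_p(n+(t-1)\ell) - t\log_p(p-1),
\end{equation*}
and then rearrange to isolate $n-\log_p|\mathcal{C}|$ on the left-hand side. Next, I would apply the two crude estimates $\log_p t! \leq t\log_p t$ (which follows from $t! \leq t^t$) and $n+(t-1)\ell \geq n$; both enter with the correct sign to preserve a lower bound on $n-\log_p|\mathcal{C}|$. This gives
\begin{equation*}
n - \log_p|\mathcal{C}| \geq t\log_p n - t\log_p\left(t/(p-1)\right) - t(\ell+1).
\end{equation*}
Finally, since $(p-1)^2 \geq 1$ for any $p \geq 2$, we have $t/(p-1) \leq t(p-1)$, so replacing $t/(p-1)$ by $t(p-1)$ inside $-t\log_p(\cdot)$ further weakens the lower bound and yields the expression stated in the lemma.

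There is no substantive obstacle: each step is either a logarithm identity or a one-line monotonicity estimate. The only subtlety is book-keeping the direction of each weakening, in particular at the last step, where one has to verify that trading $t/(p-1)$ for the larger quantity $t(p-1)$ under a negated logarithm indeed weakens (rather than strengthens) the bound on $n-\log_p|\mathcal{C}|$.
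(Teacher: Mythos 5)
Your proposal is correct and follows essentially the same route as the paper: take $\log_p$ of the bound in Lemma \ref{lemma:sphere_packing_bound_simple}, then weaken via $t!\leq t^t$, $n+(t-1)\ell\geq n$, and $\log_p(p-1)\geq -\log_p(p-1)$ for $p\geq 2$. The paper performs these weakenings in one step where you make the last one explicit, but the argument is identical.
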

\begin{proof}
	To obtain the lower bound of the redundancy, we use the upper bound onto the codebook size, stated in Lemma \ref{lemma:sphere_packing_bound_simple} and get
	\begin{align*}
	n-\log_p |\mathcal{C}| &\geq n - \log_p \left( \frac{t!}{(n+(t-1)\ell)^t} \frac{p^{n+t(\ell+1)}}{(p-1)^t} \right) \\
	&= t \log_p \left(n+(t-1)\ell\right)+t \log_p (p-1)\\ 
	&\qquad -\log_p ( t!)-t(\ell+1) \\
	&\geq t\log_p n - t\log_p (t(p-1)) - t(\ell+1).
	\end{align*}%
\end{proof}
Lemma \ref{lem:c1c} indicates that the minimum required redundancy of a $t$-tandem-duplication-correcting code $\mathcal{C} \subseteq \mathbb{Z}_p^n$ decreases with increasing duplication length $\ell$. It can further be seen that similar to the case of conventional errors, the redundancy scales as $t \log_pn$ for growing code lengths $n$.

\subsection{Stronger Sphere Packing Bound}

In this section we derive a sphere packing bound based on the fact that words with distinct duplication roots have non-overlapping error balls for any $t$, which has been shown in~\cite{JFSB16}. Here we use the partition $\psi_i = \Set{\mathbf{x} \in \mathbb{Z}_p^n : \mu(\mathbf{x})=\mathbf{r}_i}$ over all duplication roots $\mathbf{r}_i$ to divide $\mathbb{Z}_p^n$ into disjoint subsets, where the words within a subset belong to the same root. 
Similar elaborations as for the previous sphere packing bound result in the following bound.
\begin{lemma} \label{lemma:sp:complex}
	The cardinality of any $t$ tandem duplication correcting code over $\mathbb{Z}_p^n$ is upper bounded by
	\begin{equation*}
	|\mathcal{C}| \leq p^{\ell} \sum_{w=0}^{\lfloor \frac{n}{\ell} \rfloor -1} \sum_{r=0}^{n-(w+1)\ell} N_{p,\ell}(n-(w+1)\ell,r) \frac{\binom{r+w+t}{w+t}}{\binom{r+t}{t}},
	\end{equation*}
	where $N_{p,\ell}(n,r)$ is the number of words in $\mathbb{Z}_p^n$ with Hamming weight~$r$, which have less than $\ell$ consecutive zeros. 
\end{lemma}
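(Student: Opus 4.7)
The plan is to invoke Theorem~\ref{thm:sp_bound} with the root partition $\psi_\mathbf{r} = \{\mathbf{x} \in \mathbb{Z}_p^n : \mu(\mathbf{x}) = \mathbf{r}\}$, indexed over all duplication roots, which is valid because words in different parts have disjoint error balls (as noted in the text, following~\cite{JFSB16}). The first step is to describe $\psi_\mathbf{r}$ through $\phi$: a word $\mathbf{x}$ with $\mu(\mathbf{x})=\mathbf{r}$ corresponds to a pair $(\mathbf{y},\mathbf{z}) \in \mathbb{Z}_p^\ell \times \mathbb{Z}_p^{n-\ell}$, where $\mathbf{y}$ is arbitrary and $\mathbf{z}$ is obtained from $\mathbf{r}$ by inserting some number $w$ of length-$\ell$ zero-blocks distributed among the $r+1$ bins formed by the $r=\mathrm{wt}_\mathrm{H}(\mathbf{r})$ non-zero symbols. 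Length bookkeeping forces $|\mathbf{r}|=n-(w+1)\ell$, so roots are naturally grouped by the pair $(w,r)$, with $w$ ranging from $0$ to $\lfloor n/\ell\rfloor-1$ and exactly $N_{p,\ell}(n-(w+1)\ell,r)$ roots for each such pair.

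Next, I would compute both quantities appearing in Theorem~\ref{thm:sp_bound}. Since a tandem duplication of length $\ell$ in $\mathbf{x}$ corresponds via $\phi$ to inserting one $\ell$-zero-block into $\mathbf{z}$, leaving $\mathbf{y}$ and all non-zero entries of $\mathbf{z}$ intact, every element of $S_t^{\tau_\ell}(\psi_\mathbf{r})$ has the form $\phi^{-1}(\mathbf{y}',\mathbf{z}')$ with $\mathbf{y}' \in \mathbb{Z}_p^\ell$ free and $\mathbf{z}'$ obtained from $\mathbf{r}$ by distributing $w+t$ length-$\ell$ zero-blocks among its $r+1$ bins, yielding
\[ |S_t^{\tau_\ell}(\psi_\mathbf{r})| \leq p^\ell \binom{r+w+t}{w+t}. \]
For the denominator, fixing any $\mathbf{x} \in \psi_\mathbf{r}$, each duplication adds exactly one block to one of the $r+1$ bins of $\mathbf{z}$, and two duplication histories producing the same multiset of bin increments yield the same output, so $|S_t^{\tau_\ell}(\mathbf{x})|=\binom{r+t}{t}$ independently of the specific $\mathbf{x}$ chosen; this is precisely the count already used in the proof of Lemma~\ref{lemma:sphere_packing_bound_simple}.

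Finally, substituting both counts into Theorem~\ref{thm:sp_bound} and summing the per-root ratio $p^\ell\binom{r+w+t}{w+t}/\binom{r+t}{t}$ over all roots, grouped by $(w,r)$ with multiplicity $N_{p,\ell}(n-(w+1)\ell,r)$, gives exactly the stated bound. The main obstacle is the union-sphere count: one has to justify that the reachable words never escape the fixed root class and that their $\phi$-image is characterized entirely by the bin distribution of $w+t$ zero-blocks around $\mathbf{r}$. This rests on the $\phi$-translation of tandem duplication into $\ell$-zero-block insertion and on the invariance of the duplication root under this operation from~\cite{JFSB16}; once that correspondence is in place, the rest is combinatorial bookkeeping of the range of the summation indices $w$ and $r$.
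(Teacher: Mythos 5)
Your proposal is correct and follows exactly the route the paper intends: the paper gives no explicit proof of Lemma~\ref{lemma:sp:complex}, stating only that it follows by ``similar elaborations'' from Theorem~\ref{thm:sp_bound} applied to the root partition $\psi_i = \{\mathbf{x} : \mu(\mathbf{x})=\mathbf{r}_i\}$, and your argument supplies precisely the missing bookkeeping (grouping roots by $(w,r)$ with multiplicity $N_{p,\ell}(n-(w+1)\ell,r)$, bounding $|S_t^{\tau_\ell}(\psi_{\mathbf{r}})|$ by $p^\ell\binom{r+w+t}{w+t}$, and using $|S_t^{\tau_\ell}(\mathbf{x})|=\binom{r+t}{t}$ as in Lemma~\ref{lemma:sphere_packing_bound_simple}).
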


An explicit expression for $N_{p,\ell}(n,r)$ can be found in \cite{Kur11,LWY17}. Compared to the sphere packing bound in Lemma \ref{lemma:sphere_packing_bound_simple} this upper bound has a very complex expression, but in Section \ref{sec:con}, we perform numerical evaluations which show that the bound is stronger than the first one.
\section{Code Constructions}  \label{sec:con}
In this section, we propose two constructions that are able to correct $t$ tandem duplications, each of fixed length $\ell$. We mainly concentrate on codes correcting insertions of zero-blocks,  due to the equivalence of these errors to tandem duplications under the map $\phi$, as seen in the previous section. First, we derive a construction that is able to correct $t$ insertions of zero-blocks, each of length $\ell$.
\subsection{Construction based on \cite{DA10}}
\begin{definition} For some integers $r,n$ and $\xi$ with $r\leq n$ and additionally a vector $\mathbf{a} \in \{0,1,\dots,\xi-1\}^t$, we define
	\begin{align*}
		\mathcal{C}_{p,t,\ell}(n,r, \mathbf{a}, \xi) &= \Bigg\{  \mathbf{z} \in \mathbb{Z}_p^{n} : \mathrm{wt}_\mathrm{H}(\mathbf{z}) = r, \\ & \sum_{i=1}^{r+1} i^q \left\lfloor \frac{b_i}{\ell} \right\rfloor = a_q \;(\bmod \,\xi) \; \forall \; 1\leq q \leq t \Bigg\},
	\end{align*} 
	where $b_i$ denotes the number of zeros in $\mathbf{z}$, which have exactly $i-1$ non-zero symbols to the left, i.e. $\mathbf{z} = (0^{b_1} u_1 0^{b_1} u_2 \dots u_r 0^{b_{r+1}} )$ with non-zero letters $u_i \in \mathbb{Z}_p \setminus \{0\}$.
\end{definition}
By this definition, $\mathcal{C}_{p,t,\ell}(n,r,\mathbf{a},\xi) \subseteq \mathbb{Z}_p^n$ contains words of Hamming weight $r$, which satisfy the $t$ checksum constraints. When $\xi$ is chosen carefully, we will show that this code corrects $t$ insertions of zero-blocks, each of length $\ell$. 
\begin{lemma} \label{lemma:lara:correcting}
	The code $\mathcal{C}_{p,t,\ell}(n,r,\mathbf{a},\xi)$ is $t$ zero-block (length $\ell$) insertion correcting for any $\mathbf{a} \in \{0,1,\dots, \xi-1 \}^t$, if $\xi$ is a prime number satisfying $\xi > \max(t,r)$.
\end{lemma}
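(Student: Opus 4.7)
My plan is to show directly that if two codewords $\mathbf{z}_1, \mathbf{z}_2 \in \mathcal{C}_{p,t,\ell}(n,r,\mathbf{a},\xi)$ can be taken to a common word $\mathbf{z}'$ by at most $t$ zero-block insertions each, then $\mathbf{z}_1 = \mathbf{z}_2$. Since inserting length-$\ell$ blocks of zeros never changes the sequence of non-zero symbols, both codewords must share the same letters $u_1, \ldots, u_r$ in the same order and differ only in their zero-run lengths $b_i^{(1)}, b_i^{(2)}$. Comparing the length of $\mathbf{z}'$ in both parses forces the number of insertions to agree, say $K \leq t$. Letting $k_i^{(j)} \geq 0$ denote the number of blocks inserted into the $i$-th run of $\mathbf{z}_j$, we have $\sum_i k_i^{(j)} = K$ and $\lfloor b_i^{(j)}/\ell\rfloor + k_i^{(j)} = \lfloor b'_i/\ell\rfloor$, where the right-hand side depends only on $\mathbf{z}'$.

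Subtracting the checksum constraints of $\mathbf{z}_1$ and $\mathbf{z}_2$ then gives $\sum_{i=1}^{r+1} i^q (k_i^{(1)} - k_i^{(2)}) \equiv 0 \pmod{\xi}$ for $q = 1, \ldots, t$, and the length identity supplies the analogous statement for $q = 0$. Viewing $\mathbf{k}^{(j)}$ as a multiset $M_j$ of slot indices drawn from $\{1, \ldots, r+1\}$ with multiplicity, of total size $K$, these $t+1$ congruences say exactly that $M_1$ and $M_2$ have identical power sums $p_0, p_1, \ldots, p_t$ in $\mathbb{F}_\xi$.

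The heart of the proof is then Newton's identities. Since $\xi$ is prime with $\xi > t \geq K$, every integer in $\{1, \ldots, K\}$ is invertible in $\mathbb{F}_\xi$, so matching $p_1, \ldots, p_K$ recursively forces the elementary symmetric polynomials $e_1, \ldots, e_K$ of $M_1$ and $M_2$ to agree. Equivalently, the monic polynomials $\prod_{i \in M_j}(x - i) \in \mathbb{F}_\xi[x]$ coincide. The second hypothesis $\xi > r$ then lifts this back to the integers: since all nonzero differences between elements of $\{1, 2, \ldots, r+1\}$ lie in $\{1, \ldots, r\}$, these slot indices embed injectively into $\mathbb{F}_\xi$. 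Hence the two polynomials share the same roots with the same multiplicities inside $\{1, \ldots, r+1\}$, so $M_1 = M_2$, $\mathbf{k}^{(1)} = \mathbf{k}^{(2)}$, and finally $\mathbf{z}_1 = \mathbf{z}_2$.

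The main obstacle I anticipate lies in the Newton-identities step: one must argue that matching power sums really do determine the multiset, and this is exactly what couples the two assumptions in $\xi > \max(t,r)$. The condition $\xi > t$ makes $1, \ldots, K$ invertible so Newton's identities can be inverted, and the condition $\xi > r$ makes the slot indices distinct in $\mathbb{F}_\xi$ so matching polynomials over $\mathbb{F}_\xi$ really do force matching integer multisets. With both in place the deduction is forced and the code corrects the required errors.
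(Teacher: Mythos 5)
Your proof is correct and takes essentially the same route as the paper's: in both cases the checksum differences yield the power sums $\sum_j i_j^q \pmod{\xi}$ of the (multiset of) insertion positions, and the primality of $\xi$ together with $\xi > \max(t,r)$ guarantees these power sums determine the positions uniquely. The only difference is that the paper delegates this last uniqueness step to \cite{DA10}, whereas you make it self-contained by carrying out the Newton's-identities argument (using $\xi > t$ for invertibility of $1,\dots,K$ and $\xi > r$ for injectivity of the slot indices in $\mathbb{F}_\xi$), which is a welcome but not substantively different elaboration.
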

\begin{proof}
	We will use a similar proof technique as in \cite{DA10} to show that $\mathcal{C}_{p,t,\ell}(n,r,\mathbf{a},\xi)$ is $t$ zero-block insertions correcting. Denote by $i_1, \dots, i_t$ the positions of the insertions in $\mathbf{z}$, i.e. the $j$-th insertion occurred in the $i_j$-th run of zeros. Since $\mathrm{wt_\mathrm{H}}(\mathbf{z})=r$, it follows that $1\leq i_j \leq r+1$ for all $1\leq j \leq t$. Computing the checksum deficiencies yields
	\begin{align*}
		\tilde{s}_q &= \tilde{a}_q-a_q= \sum_{i=1}^{r+1} i^q \left\lfloor \frac{\tilde{b}_i}{\ell} \right\rfloor - a_q \,\, &&(\bmod \, \xi) \\&= i_1^q + i_2^q + \dots + i_t^q &&(\bmod \,\xi),
	\end{align*}
	for all $1 \leq q \leq t$. In \cite{DA10}, it has been shown that this system of equations has a unique solution if $\xi$ is a prime number satisfying $\xi > \max (t,r)$. This allows to uniquely identify the positions of the inserted zero-blocks and thus allows to recover the original word.
\end{proof}
\begin{construction}\label{con:1}
	Let
	$$ \mathcal{C}^1 = \{ \phi^{-1}(\mathbf{y}, \mathbf{z}): \mathbf{y} \in \mathbb{Z}_p^\ell, \mathbf{z} \in \mathcal{C}_{p,t,\ell}(n-\ell,r,\mathbf{a}_r^*,\xi_r) \}, $$
	where $r = \mathrm{wt_H}(\mathbf{z})$. Further, $\xi_r$ is a prime number satisfying $\max(t,r)<\xi_r \leq 2 \max(t,r)$ and $$\mathbf{a}_r^* = \underset{\mathbf{a}\in \{0,1,\dots,\xi-1\}^t}{\arg \max}\, |\mathcal{C}_{p,t,\ell}(n-\ell,r,\mathbf{a},\xi_r)|.$$
\end{construction}
Since $\mathbf{z}$ is chosen to be contained in $\mathcal{C}_{p,t,\ell}(n-\ell,r,\mathbf{a}_r^*, \xi_r)$, the code $\mathcal{C}^1$ is able to correct $t$ tandem duplications of length~$\ell$. The following lemma on the cardinality of Construction \ref{con:1} directly follows from the pigeonhole principle.
\begin{lemma} \label{lem:card:c1}
	The cardinality of $\mathcal{C}^1$ satisfies
	$$ |\mathcal{C}^1| \geq p^\ell \sum_{r=0}^{n-\ell} \frac{\binom{n-\ell}{r}(p-1)^r}{(2\max(t,r))^t}. $$
\end{lemma}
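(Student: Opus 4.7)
The plan is to prove Lemma \ref{lem:card:c1} by a direct pigeonhole count, partitioning $\mathbb{Z}_p^{n-\ell}$ by Hamming weight and then bounding the size of the largest checksum class inside each weight layer.

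First, I would fix a Hamming weight $r \in \{0,1,\dots,n-\ell\}$ and consider the set
\begin{equation*}
W_r = \{\mathbf{z} \in \mathbb{Z}_p^{n-\ell} : \mathrm{wt}_\mathrm{H}(\mathbf{z}) = r\},
\end{equation*}
which has cardinality $\binom{n-\ell}{r}(p-1)^r$. The sets $\mathcal{C}_{p,t,\ell}(n-\ell, r, \mathbf{a}, \xi_r)$ for $\mathbf{a} \in \{0,1,\dots,\xi_r-1\}^t$ form a partition of $W_r$ (one class per possible vector of checksums modulo $\xi_r$), so there are at most $\xi_r^t$ nonempty classes. Hence the largest class, which by construction is $\mathcal{C}_{p,t,\ell}(n-\ell, r, \mathbf{a}_r^*, \xi_r)$, satisfies
\begin{equation*}
|\mathcal{C}_{p,t,\ell}(n-\ell, r, \mathbf{a}_r^*, \xi_r)| \geq \frac{\binom{n-\ell}{r}(p-1)^r}{\xi_r^t}.
\end{equation*}

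Next I would use the upper bound $\xi_r \leq 2\max(t,r)$ from the construction to replace $\xi_r^t$ in the denominator by $(2\max(t,r))^t$, yielding a lower bound on the number of valid suffixes $\mathbf{z}$ of each weight. Since the map $\phi$ is a bijection and the prefix $\mathbf{y} \in \mathbb{Z}_p^\ell$ ranges freely (contributing a factor of $p^\ell$) and independently of $\mathbf{z}$, summing over $r$ from $0$ to $n-\ell$ gives
\begin{equation*}
|\mathcal{C}^1| \geq p^\ell \sum_{r=0}^{n-\ell} \frac{\binom{n-\ell}{r}(p-1)^r}{(2\max(t,r))^t},
\end{equation*}
which is the claimed bound.

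The only nontrivial input beyond pigeonhole is the existence of a prime $\xi_r$ with $\max(t,r) < \xi_r \leq 2\max(t,r)$, which is guaranteed by Bertrand's postulate and is needed to justify that the construction is well-defined and that Lemma \ref{lemma:lara:correcting} applies. I do not foresee any real obstacle here; the main care is simply to note that different prefixes $\mathbf{y}$ together with suffixes $\mathbf{z}$ of different weights produce distinct codewords under $\phi^{-1}$, so the counts from each weight layer add without overlap.
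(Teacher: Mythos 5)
Your proposal is correct and follows essentially the same route as the paper's proof: a pigeonhole argument over the $\xi_r^t$ checksum classes within each Hamming-weight layer, Bertrand's postulate to guarantee a prime $\xi_r \leq 2\max(t,r)$, and a factor $p^\ell$ from the free choice of prefix. Your added remark that distinct prefixes and weights yield distinct codewords under $\phi^{-1}$ is a harmless elaboration of a step the paper leaves implicit.
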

\begin{proof}
	We begin by observing that for any $\xi_r \in \mathbb{N}$, each word $\mathbf{z} \in \mathbb{Z}_p^\ell$ with $\mathrm{wt_H}(\mathbf{z}) = r$ is contained in $\mathcal{C}_{p,t,\ell}(n-\ell,r,\mathbf{a}_r, \xi_r)$ for some $\mathbf{a}_r \in \{0,1,\dots,\xi_r-1\}^t$. Therefore,
	$$ \left| \bigcup_{\mathbf{a}_r } \mathcal{C}_{p,t,\ell}(n-\ell,r,\mathbf{a}_r, \xi_r) \right| = \binom{n-\ell}{r}(p-1)^r, $$
	which implies that there exists an $\mathbf{a}_r^*$ such that
	$$ |\mathcal{C}_{p,t,\ell}(n-\ell,r,\mathbf{a}_r^*,\xi_r)| \geq \frac{\binom{n-\ell}{r}(p-1)^r}{\xi_r^t}, $$
	since there are $\xi_r^t$ distinct vectors $\mathbf{a}_r$. By Lemma \ref{lemma:lara:correcting}, $\xi_r$ has to be a prime that satisfies $\xi_r > \max(t,r)$. Using Bertrand's postulate, there exists a prime $\max(t,r) < \xi_r \leq 2\max(t,r) $. Choosing $\mathbf{y} \in \mathbb{Z}_p^\ell$ arbitrarily yields the lemma.
\end{proof}
Further, the following lemma can be shown for the asymptotic behavior of the cardinality of $\mathcal{C}^1$.
\begin{lemma}
	The cardinality of $\mathcal{C}^1$ satisfies asymptotically
	$$ |\mathcal{C}^1| \gtrsim \frac{p^{n+t}}{(n(p-1))^t}. $$
\end{lemma}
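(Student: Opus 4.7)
The plan is to revisit the argument of Lemma \ref{lem:card:c1} and extract a sharper asymptotic by avoiding the factor of $2$ introduced when invoking Bertrand's postulate. The proof of that lemma in fact yields the stronger intermediate estimate
\begin{equation*}
|\mathcal{C}^1| \geq p^\ell \sum_{r=0}^{n-\ell} \frac{\binom{n-\ell}{r}(p-1)^r}{\xi_r^t},
\end{equation*}
where $\xi_r$ may be chosen as the smallest prime strictly exceeding $\max(t,r)$; only afterwards is the crude bound $\xi_r \leq 2\max(t,r)$ applied. For the asymptotic statement I would instead appeal to the prime number theorem, which implies that for every fixed $\epsilon > 0$ the interval $(r,(1+\epsilon)r]$ contains a prime for all sufficiently large $r$; equivalently, $\xi_r = r(1+o(1))$ as $r \to \infty$. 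This is precisely what is needed to eliminate the constant factor $2^t$ in the limit.

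Next I would concentrate the binomial sum about its mode. Let $r^\star = (n-\ell)(p-1)/p$ be the mean of a $\mathrm{Binomial}(n-\ell,(p-1)/p)$ variable, whose standard deviation is $\Theta(\sqrt n)$. A Chernoff or Chebyshev bound shows that, for any $\omega_n \to \infty$ with $\omega_n = o(\sqrt n)$, the window $W_n = \{r : |r-r^\star| \leq \omega_n\sqrt n\,\}$ captures essentially all of the binomial mass,
\begin{equation*}
\sum_{r \in W_n} \binom{n-\ell}{r}(p-1)^r = (1-o(1))\, p^{n-\ell},
\end{equation*}
and satisfies $r = r^\star(1+o(1))$ uniformly on $W_n$. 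Because $r \to \infty$ throughout $W_n$, the prime-gap estimate then gives $\xi_r = r^\star(1+o(1))$ uniformly as well. Substituting these two facts into the restricted sum yields
\begin{equation*}
|\mathcal{C}^1| \geq \frac{p^\ell (1-o(1))\, p^{n-\ell}}{(r^\star)^t(1+o(1))^t} = \frac{p^{n+t}(1-o(1))}{((n-\ell)(p-1))^t},
\end{equation*}
and $(n-\ell)^t = n^t(1-o(1))$ for fixed $t$ delivers the target $|\mathcal{C}^1| \geq p^{n+t}/(n(p-1))^t \cdot (1-o(1))$.

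The main obstacle I anticipate is the prime-gap step: Bertrand's postulate alone is too weak by the multiplicative constant $2^t$, so some quantitative prime-counting input (the PNT is enough, but an unconditional bound such as Baker--Harman--Pintz would suffice as well) is required to conclude $\xi_r/r \to 1$. The remaining work is standard binomial concentration, and the $p^\ell$ prefactor, being constant in $n$, presents no asymptotic difficulty.
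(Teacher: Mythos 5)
Your argument is correct, and its skeleton coincides with the paper's: both start from the intermediate bound $|\mathcal{C}^1| \geq p^\ell \sum_{r=0}^{n-\ell} \binom{n-\ell}{r}(p-1)^r\,\xi_r^{-t}$ established in the proof of Lemma~\ref{lem:card:c1}, and both recognize that Bertrand's postulate must be replaced by a prime-gap statement of the form $\xi_r = r(1+o(1))$ to avoid losing the factor $2^t$ --- you invoke the prime number theorem, the paper cites the prime-existence result of \cite[Section 4.1]{DA10}, which is the same input. Where you genuinely diverge is in the asymptotic evaluation of $\sum_r \binom{n-\ell}{r}(p-1)^r r^{-t}$. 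The paper proceeds algebraically: it bounds $r^{-t} \geq \bigl((r+1)\cdots(r+t)\bigr)^{-1} = \bigl(t!\binom{r+t}{t}\bigr)^{-1}$ and uses the identity $\binom{M}{r}/\binom{r+t}{t} = \frac{M!\,t!}{(M+t)!}\binom{M+t}{r+t}$ to turn the weighted sum into an essentially complete binomial sum of order $p^{n-\ell+t}(p-1)^{-t}$, with the prefactor $\frac{(n-\ell)!}{(n-\ell+t)!}\sim n^{-t}$ supplying the polynomial decay. You instead use concentration of the $\mathrm{Binomial}(n-\ell,(p-1)/p)$ mass in a window of width $o(n)$ around the mode $r^\star=(n-\ell)(p-1)/p$, where $\xi_r^t=(r^\star)^t(1+o(1))$ can be pulled out uniformly; since $(r^\star)^t = (n(p-1)/p)^t(1+o(1))$, this reproduces exactly the constant $p^t/(n(p-1))^t$. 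Both routes are sound; the paper's identity is slicker and purely combinatorial, while your localization is more transparent about \emph{why} the answer is $p^{n}\cdot(r^\star)^{-t}$ and generalizes more readily to other weight functions of $r$. One small point worth making explicit in your write-up: on the concentration window $r\to\infty$, so $\max(t,r)=r$ eventually and the smallest admissible prime $\xi_r$ indeed satisfies the hypothesis $\xi_r>\max(t,r)$ of Lemma~\ref{lemma:lara:correcting}; also, the order of limits ($n\to\infty$ first, then $\epsilon\to 0$ in the prime-gap estimate) should be stated to justify replacing $(1+\epsilon)^t$ by $1$.
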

\begin{proof}
	The cardinality of Construction \ref{con:1} is lower bounded by
	$$ |\mathcal{C}^1| \geq p^\ell \sum_{r=0}^{n-\ell} \frac{\binom{n-\ell}{r}(p-1)^r}{\xi_r^t}, $$
	where $\xi_r > \max(t,r)$ are primes.
	Using the result about the existence of prime numbers from \cite[Section 4.1]{DA10}, yields
	\begin{align*}
		 |\mathcal{C}^1| & \gtrsim p^\ell \sum_{r=t}^{n-\ell} \frac{\binom{n-\ell}{r}(p-1)^r}{r^t} \\
		 & \geq \frac{p^\ell (n-\ell)!}{(n-\ell+t)!} \sum_{r=t}^{n-\ell} \binom{n-\ell+t}{r}(p-1)^r \gtrsim \frac{p^{n+t}}{n^t (p-1)^t},
	\end{align*}
	which proves the lemma.
\end{proof}
\subsection{Construction using Duplication Roots}
Next we introduce a construction based on duplication roots. A duplication root (duplication length $\ell$) is a word, that does not contain any duplications of length $\ell$. For more detailed information about duplication roots, we refer the reader to \cite{JFSB16}. We introduce a function, which maps a tandem duplication of length $\ell$ to an insertion of a single zero.
\begin{definition}
	For any word $\mathbf{x} \in \mathbb{Z}_p^n$ with $\phi(\mathbf{x}) = (0^{b_1} u_1 0^{b_1} u_2 \dots u_r 0^{b_{r+1}} ) \in \mathbb{Z}_p^n$ and non-zero letters $u_i \in \mathbb{Z}_p \setminus\{0\}$, we define the injective map $$\mathcal{T}(\mathbf{x})=(y(\mathbf{x}), \mu(\mathbf{x}),\pi(\mathbf{x}))$$ with
	\begin{align*}
		y(\mathbf{x})&=(x_1\dots x_\ell) \\
		\pi (\mathbf{x})&=\left( 0^{ \lfloor \frac{b_1}{\ell} \rfloor } 1 \dots 1 0^{ \lfloor \frac{b_{r+1}}{\ell} \rfloor }  \right),
	\end{align*}
	and $\mu(\mathbf{x})$ as in Definition \ref{def:mu}.
\end{definition}
It can be shown that a tandem duplication in $\mathbf{x}$ directly translates into an insertion of a single zero in $\pi(\mathbf{x})$ of $\mathcal{T}(\mathbf{x})$ and leaves $y(\mathbf{x})$ and $\mu(\mathbf{x})$ unchanged.
For given $\mathbf{y} \in \mathbb{Z}_p^\ell$ and $\mu \in \mathbb{Z}_p^{n-\ell}$, we consider the set $\rho(\mathbf{y},\mu)=\{ \mathbf{x} \in \mathbb{Z}_p^n : y(\mathbf{x}) = \mathbf{y} \wedge \mu(\mathbf{x}) = \mu \}$.
Then the code $\mathcal{C} \subseteq \mathbb{Z}_p^n$ is $t$-tandem-duplication-correcting if for all $\mathbf{y}, \mu$ the binary set $\pi(\mathcal{C} \cap \rho(\mathbf{y},\mu))$ is $t$-single-zero-insertion-correcting.
For the set $\pi(\mathcal{C} \cap \rho(\mathbf{y},\mu))$ we use a single zero insertion correcting set
proposed in \cite{DA10} to obtain the following construction.

\begin{construction} \label{con:lara}
	We define the code of length $n$ over $\mathbb{Z}_p^n$, for some duplication length $\ell \in \mathbb{N}$ as
	\begin{equation*}
	\mathcal{C}^2 = \bigcup_{w=0}^{\left\lfloor \frac{n}{\ell} \right\rfloor-1} \Set{\mathbf{x} \in \mathbb{Z}_p^n | 
		\begin{aligned}
		&\mathrm{wt_H}(\mu(\mathbf{x})) = r \\ 
		& \mu(\mathbf{x}) \in \mathbb{Z}_p^{n-(w+1)\ell} \\
		& \pi(\mathbf{x}) \in \mathcal{C}_{2,t,1}(r+w,r,\mathbf{a}_r^*,\xi_r)
		\end{aligned} }.
	\end{equation*} 
	\label{def:code_b}
\end{construction}
Since a tandem duplication corresponds to a single insertion of a zero in $\pi(\mathbf{x})$, the code $\mathcal{C}^2$ is also $t$-tandem-duplication-correcting. Its cardinality can be derived in a similar fashion as for Construction \ref{con:1} and we obtain the following lemma.
\begin{lemma}
	The code $\mathcal{C}^2$ has a cardinality at least
	\begin{equation*}
	|\mathcal{C}^2| \geq  p^{\ell} \sum_{w=0}^{\left\lfloor \frac{n}{\ell} \right\rfloor-1} \sum_{r=0}^{n-(w+1)\ell}  \frac{N_{p,\ell}(n-(w+1)\ell,r)\binom{r+w}{w}}{(2\max(r,t)+1)^t},
	\end{equation*}
	where $N_{p,\ell}(n,r)$ is the number of words in $\mathbb{Z}_p^n$ with Hamming weight~$r$, which have less than $\ell$ consecutive zeros. 
	\label{corollary:codebook_size}
\end{lemma}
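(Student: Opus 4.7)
The plan is to mirror the proof of Lemma \ref{lem:card:c1} but to factor the count through the three components of the injective map $\mathcal{T}$. First I would observe that the outer union in Construction \ref{con:lara} is disjoint: the condition $\mu(\mathbf{x})\in\mathbb{Z}_p^{n-(w+1)\ell}$ forces $w$ to be a function of $\mathbf{x}$, because $\mu$ removes one length-$\ell$ block for every $\ell$ consecutive zeros in each run of $\phi(\mathbf{x})$, so $|\mu(\mathbf{x})|=n-\ell-w\ell$ with $w=\sum_{i}\lfloor b_i/\ell\rfloor$ determined by $\mathbf{x}$. Inside each $w$-summand I would stratify further by $r=\mathrm{wt_H}(\mu(\mathbf{x}))$, which ranges over $0\le r\le n-(w+1)\ell$, so that $|\mathcal{C}^2|$ is expressed as the double sum appearing in the lemma.

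Next I would count codewords with prescribed $(w,r)$ using the decomposition $\mathcal{T}(\mathbf{x})=(y(\mathbf{x}),\mu(\mathbf{x}),\pi(\mathbf{x}))$. For fixed $(w,r)$ the three components may be chosen independently: $y(\mathbf{x})\in\mathbb{Z}_p^\ell$ is unconstrained, contributing the factor $p^\ell$; $\mu(\mathbf{x})\in\mathbb{Z}_p^{n-(w+1)\ell}$ must have Hamming weight $r$ and no zero-run of length $\ge\ell$, yielding exactly $N_{p,\ell}(n-(w+1)\ell,r)$ choices by definition of $N_{p,\ell}$; and $\pi(\mathbf{x})\in\{0,1\}^{r+w}$ must have weight $r$ and lie in $\mathcal{C}_{2,t,1}(r+w,r,\mathbf{a}_r^*,\xi_r)$. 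The independence relies on $\mathcal{T}$ being bijective onto the triples of matching parameters, which one checks by inverting: from the zero-runs of $\boldsymbol{\pi}$ one recovers $\lfloor b_i/\ell\rfloor$, from $\boldsymbol{\mu}$ one recovers $b_i\bmod\ell$ and the non-zero letters $u_i$, and together these reconstruct $\phi(\mathbf{x})$ uniquely.

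Finally I would bound the number of admissible $\pi$ by the same pigeonhole/Bertrand argument used in Lemma \ref{lem:card:c1}: the $\binom{r+w}{w}$ binary words of length $r+w$ and weight $r$ are sorted into at most $\xi_r^t$ classes by the $t$ checksum equations, so some class has size at least $\binom{r+w}{w}/\xi_r^t$; by Lemma \ref{lemma:lara:correcting} combined with Bertrand's postulate we may take $\xi_r$ prime with $\max(t,r)<\xi_r\le 2\max(r,t)+1$. Choosing $\mathbf{a}_r^*$ to label the largest class, multiplying the three independent factors, and summing over $w$ and $r$ yields exactly the inequality of the lemma. I expect the only genuinely new step relative to Lemma \ref{lem:card:c1} to be the verification that $\mathcal{T}$ factors the count cleanly into three independent contributions of the claimed sizes; once that is established, the remainder is a direct transcription of the earlier pigeonhole argument.
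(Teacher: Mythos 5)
Your proof is correct and follows exactly the route the paper intends: the paper omits the argument, remarking only that the cardinality ``can be derived in a similar fashion as for Construction~\ref{con:1}'', and your write-up supplies precisely that---the disjoint stratification by $(w,r)$, the bijective factorization through $\mathcal{T}$ into the $p^{\ell}$, $N_{p,\ell}(n-(w+1)\ell,r)$, and $\binom{r+w}{w}$ factors, and the pigeonhole/Bertrand bound on the checksum classes. The only cosmetic remark is that Bertrand's postulate already gives $\xi_r \le 2\max(t,r)$, so the $+1$ in the denominator is simply a harmless weakening (and, as you implicitly handle, the optimal checksum vector should really be chosen per pair $(w,r)$ rather than per $r$ alone, a notational slip inherited from the construction itself).
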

Note that for $t \rightarrow \infty$, Construction \ref{con:lara} coincides with \cite[Const. A]{JFSB16}. Figs. \ref{fig:t1} and \ref{fig:t3} show the redundancy of Construction \ref{con:1} and \ref{con:lara} for several parameters and their lower bounds from Lemma \ref{lemma:sphere_packing_bound_simple} and \ref{lemma:sp:complex}. We further show the bound from \cite[Cor. 1]{LWY18}, gives a tight bound for the single error case, $t=1$. Note that \cite[Const. 1]{LWY18} yields similar redundancies as the proposed Construction \ref{con:1} for $t=1$. These results indicate that the cardinality of Construction \ref{def:code_b} is larger than the cardinality of Construction \ref{con:1}. This is because Construction \ref{def:code_b} makes efficient use of the fact that words with distinct duplication roots can never have overlapping error balls and therefore achieves larger cardinalities. Interestingly, the plots further suggest that the cardinality of Construction \ref{def:code_b} is close to optimal for some choices of parameters.

\begin{figure}[thpb]
	\begin{tikzpicture}
		\begin{axis}[
			width=8.25cm,
			height=6.5cm,
			xlabel=$n$,
			ylabel=$n-\log_p|\mathcal{C}|$,
			grid=both,
			xmin=5, xmax=60,
			ymin=-0.5,
			ymax=9,
			legend pos=north west,
			legend cell align={left},
			legend columns=3,
			transpose legend,
			]
			\addplot[color=black,solid] table [col sep=comma] {sp_simple.csv};
			\addlegendentry{SP, Lemma \ref{lemma:sphere_packing_bound_simple}};
			\addplot[color=black,dashed] table [col sep=comma] {sp_complex.csv};
			\addlegendentry{SP, Lemma \ref{lemma:sp:complex}};
			\addplot[color=black,dashdotted] table [col sep=comma] {sp_dcc_l_3.csv};
			\addlegendentry{SP, \cite[Cor. 1]{LWY18}};
			\addplot[color=black,mark=*] table [col sep=comma] {con1.csv};
			\addlegendentry{Const. \ref{con:1}};
			\addplot[color=black,mark=o] table [col sep=comma] {con2.csv};
			\addlegendentry{Const. \ref{con:lara}};
		\end{axis}
	\end{tikzpicture}
	\vspace{-.2cm}
	\caption{Redundancy of constructions and bounds $(p=2,t=1,\ell=3)$}
	\label{fig:t1}
\end{figure}
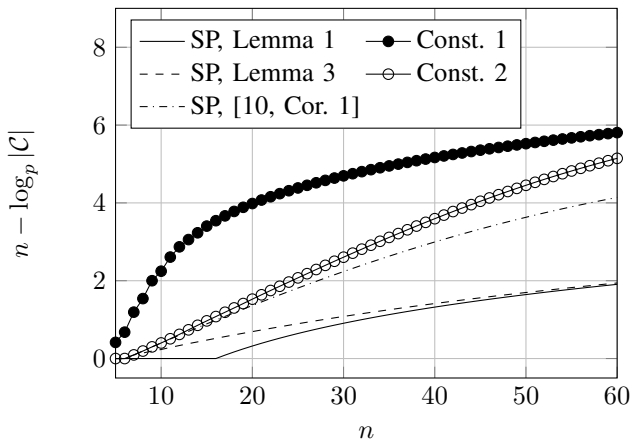

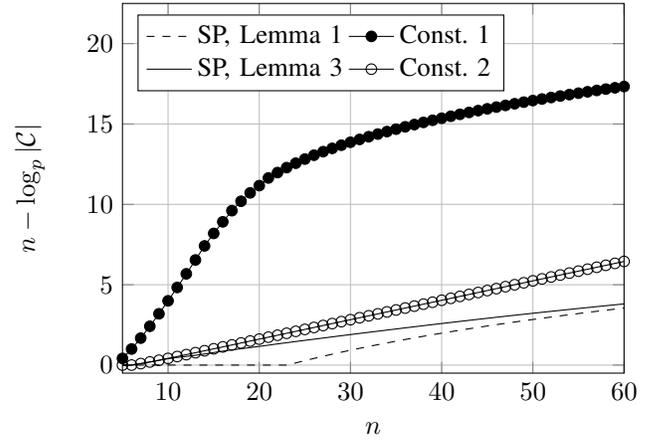
\begin{figure}[thpb]
	\begin{tikzpicture}
	\begin{axis}[
	width=8.25cm,
	height=6.5cm,
	xlabel=$n$,
	ylabel=$n-\log_p|\mathcal{C}|$,
	grid=both,
	xmin=5, xmax=60,
	ytick={0,5,10,15,20},
	ymin=-0.5,
	ymax=22.5,
	legend pos=north west,
	legend cell align={left},
	legend columns=2,
	transpose legend,
	]
	\addplot[color=black,dashed] table [col sep=comma] {sp_simple_t3.csv};
	\addlegendentry{SP, Lemma \ref{lemma:sphere_packing_bound_simple}};
	\addplot[color=black,solid] table [col sep=comma] {sp_complex_t3.csv};
	\addlegendentry{SP, Lemma \ref{lemma:sp:complex}};
	\addplot[color=black,mark=*] table [col sep=comma] {con1_t3.csv};
	\addlegendentry{Const. \ref{con:1}};
	\addplot[color=black,mark=o] table [col sep=comma] {con2_t3.csv};
	\addlegendentry{Const. \ref{con:lara}};
	\end{axis}
	\end{tikzpicture}
	\vspace{-.25cm}
	\caption{Redundancy of constructions and bounds $(p=2,t=3,\ell=3)$}
	\vspace{-.55cm}
	\label{fig:t3}
\end{figure}

\section{Conclusion}

In this paper, we have derived non-asymptotic upper bounds on the size of codes correcting $t$ tandem duplications, each of length $\ell$. We have further found two constructions can correct $t$ tandem duplications, each of length $\ell$. Our results indicate that correcting $t$ duplications requires less redundancy with increasing duplication length $\ell$, which coincides with the result from \cite{JFSB16}, where this property has been shown for correcting an arbitrary number of tandem duplications.
 
\vspace{-.135cm}

\bibliography{IEEEabrv,ref.bib}
\bibliographystyle{IEEEtranS}

\end{document}